\let\phi=\varphi
\renewcommand{\phi}{\varphi}
\begin{document}

\title{Succinct data structures for representing equivalence classes\thanks{Work done while the first and the third authors were on sabbatical at the University of Waterloo, Canada}}

\author{Moshe Lewenstein\inst{1} \and J. Ian Munro\inst{2} \and Venkatesh Raman\inst{3}}

\institute{Department of Computer Science, Bar Ilan University, Israel \\
\email{moshe@cs.biu.ac.il}
\and
Cheriton School of Computer Science, University of Waterloo, Canada \\
\email{imunro@uwaterloo.ca}
\and
The Institute of Mathematical Sciences, Chennai, India \\
\email{vraman@imsc.res.in}
}

\maketitle

\begin{abstract}
Given a partition of an $n$ element set into equivalence classes, we consider time-space tradeoffs for representing it to support the query that asks whether two given elements are in the same equivalence class. 
This has various applications including for testing whether two vertices are in the same connected component in an undirected graph or in the same strongly connected component in a directed graph.

We consider the problem in several models.
\begin{itemize}
\item
Concerning labeling schemes where we assign labels to elements and the query is to be answered just by examining the labels of the queried elements (without any extra space):
if each vertex is required to have a unique label, then we show
that a label space of $\sum_{i=1}^n \lfloor {n \over i} \rfloor$ is necessary and
sufficient. In other words, $\lg n + \lg \lg n + O(1)$ bits of space are necessary and sufficient for representing each of the labels. This slightly strengthens the known lower bound and is in contrast to the known necessary and sufficient bound of $\lceil \lg n \rceil$ for the label length, if each vertex need not get a unique label.
\item
Concerning succinct data structures for the problem when the $n$ elements are to be uniquely assigned labels from label set $\{1,\ldots, n\}$,  we first show that $\Theta(\sqrt n)$ bits are necessary and sufficient to represent the equivalence class information. This space includes the space for implicitly encoding the vertex labels.
We can support the query in such a structure in $O(\lg n)$ time in the standard word RAM model.

We then develop structures where the queries can be answered
\begin{itemize}
\item
in $O(\lg \lg n)$ time using $O(\sqrt n \lg n/\lg \lg n)$ bits, and
\item
in $O(1)$ time using $O({\sqrt n} \lg n)$ bits of space.
\end{itemize}
\end{itemize}
En route, we provide an interesting method to compute the integer nearest to the square root of integers up to $n$ using a table look up. 
We believe that this method can be of independent interest.

We also develop a dynamic structure that uses $O(\sqrt n \lg n)$ bits to support equivalence queries and unions in $O(\lg n/\lg \lg n)$ worst case time or $O(\alpha (n))$ expected amortized time where $\alpha (n)$ is the inverse Ackermann function.

\end{abstract}

\section{Introduction and Motivation}
We look at the following problem. Given a partition of an $n$ element set into equivalence classes, preprocess it, assigning a unique label to each element, to obtain a data structure with minimum space to support the following query: given two elements, determine whether they are in the same equivalence class.  We call the query an `equivalence query'.
This is a fundamental data structure problem that has various applications including for testing whether two vertices are in the same connected (or strongly connected) component in an undirected (or directed) graph. We study the problem in the context of succinct data structures.
Designing succinct (or space efficient) data structures has been an area of interest in theory and practice motivated by the need to store large amount of data. See~\cite{BCHM,FM,MN,RRR07,Brodnik-Munro} for succinct representations of dictionaries, trees, arbitrary graphs and partially ordered sets.

%

We address the time-space tradeoff for representing an equivalence class and answering the equivalence query in a couple of models.
Katz, Katz, Korman and Peleg~\cite{sicomp} introduced the notion of labeling schemes whereby every node of the graph is assigned a (not necessarily distinct) label and the required query is to be answered by just looking at the labels of the query elements. They showed that $\Omega (k \lg n)$
\footnote{We use $\lg n$ to denote $\log_2 n$}
is a lower bound of the length of the label to answer `$k$-connectivity queries', for $k$ up to polylogarithmic in $n$. For $k=1$ (which is the case for the problem in this paper), this lower bound is $\lceil \lg n \rceil$
and hence the scheme that simply assigns all elements of an equivalence class a single label that is distinct from the labels of other equivalence classes, is optimal in this model. However, in some situations (for example when one wants to support other graph operations including adjacency relations) we may want to give unique labels to each vertex.
Our first result is that in this case, we need a label space of $\sum_{i=1}^n \lfloor n/i \rfloor$, and we show that this number of labels is also sufficient. We also give an encoding scheme that uses the optimal $\lg n + \lg \lg n + O(1)$ bits for the labels. The encoding scheme is similar to the one in~\cite{SIDMA}, but our lower bound is stronger, and more importantly we establish an exact tight bound for the label space. This result is discussed in Section~\ref{labscheme}.

Then, in Section~\ref{sds}, we give succinct data structures for the problem in the model where the labels can be freely reassigned (but need to be unique and in the range $1$ to $n$), and the query can be answered by looking at a small space data structure. We first observe that the information theoretic lower bound to represent the equivalence class information is
$\Omega (\sqrt n)$ bits, and we provide a scheme using $O(\sqrt n)$ bits in which the query can be answered in $O(\lg n)$ time. In the rest of the section, we develop a data structure where the query can be answered in constant time albeit using $O(\sqrt n \lg n)$ bits of space. In Section~\ref{dynamic}, we develop methods that also support merge operation on the equivalence classes using asymptotically the same space, as fast as other known non-space efficient structures.

These structures operate in the standard word RAM model with a word size of $w=\Omega (\lg n)$~\cite{wordRAM} where multiplication and shifts can be performed in constant time. Furthermore, our succinct structures modify the initial labels of the elements to an implicit labeling scheme. We discuss applications and limitations of this approach in Section~\ref{concl}.

%
%

%
%
%
%
%
%
\section{Labeling scheme with unique labels for elements}
\label{labscheme}
In the problem, which we call the {\em direct equivalence queries} problem,  each element is to be given a unique label, and the equivalence query is to be answered by computing directly from the two labels. It is known~\cite{SIDMA} that $\lg n + \Theta (\lg \lg n)$ bits of space are necessary and sufficient to represent the labels. We strengthen the bound to
$\lg n + \lg \lg n + \Theta (1)$. The encoding that achieves this bound is similar to the one in \cite{SIDMA}, we provide it for completeness, but our lower bound establishes a tight bound on the label space.
We first prove the following theorem.
\begin{theorem}
\label{labelrange}
Let a partition of an $n$ element set into equivalence classes be given as input to the direct equivalence queries problem. Then a label space of $\sum_{i=1}^n \lfloor n/i \rfloor$ is necessary and sufficient.
\end{theorem}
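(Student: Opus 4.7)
The plan is to recast the problem as a packing question: a direct equivalence scheme amounts to choosing a label set $L$ together with an equivalence relation $\sim$ on $L$ (the ``these two labels name the same class'' predicate must be reflexive, symmetric and transitive, since equality-of-equivalence-class on the input is), and answering every possible input means that for every partition $P$ of $[n]$ we can inject $[n]$ into $L$ so that each $P$-class maps inside a single $\sim$-class and distinct $P$-classes map to distinct $\sim$-classes. Thus the theorem reduces to determining the minimum $|L|$ of an equivalence structure $(L,\sim)$ that ``hosts'' every partition of $[n]$.

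To analyze this, let $a_k$ be the number of $\sim$-classes of size exactly $k$, and let $b_k=\sum_{k'\ge k}a_{k'}$ be the number of $\sim$-classes of size at least $k$. I would invoke Hall's marriage theorem (or equivalently a greedy argument matching largest $P$-classes to largest $\sim$-classes) on the bipartite graph in which a $P$-class of size $s$ is adjacent to every $\sim$-class of size $\ge s$. This shows that a partition $P$ embeds in $(L,\sim)$ iff, for every $k$, the number of $P$-classes of size $\ge k$ does not exceed $b_k$. For each fixed $k$ the worst such $P$ is the one consisting of $\lfloor n/k\rfloor$ classes of size $k$ (padded with singletons if $k\nmid n$; those do not add to the count of classes of size $\ge k$ for $k>1$). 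Hence a universal scheme must satisfy $b_k \ge \lfloor n/k\rfloor$ for every $k\in\{1,\ldots,n\}$.

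For the lower bound on $|L|$ I would use Abel summation: $|L|=\sum_{k\ge 1}k\,a_k=\sum_{k\ge 1}b_k$, and the constraints $b_k\ge \lfloor n/k\rfloor$ immediately yield $|L|\ge \sum_{k=1}^n \lfloor n/k\rfloor$. For the matching upper bound I would set $a_k := \lfloor n/k\rfloor - \lfloor n/(k+1)\rfloor$, which is non-negative since $\lfloor n/k\rfloor$ is non-increasing in $k$; this choice gives $b_k=\lfloor n/k\rfloor$ exactly, so $|L|=\sum_{k=1}^n\lfloor n/k\rfloor$ and Hall's condition is met with equality, whence every partition of $[n]$ embeds and the scheme is realized (concretely, one can name the labels by triples $(k,j,p)$ with $1\le j\le a_k$ and $1\le p\le k$, declaring two labels $\sim$-equivalent iff they share $(k,j)$).

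The main obstacle is getting the packing characterization right: although the extremal partition is different for different $k$, the single scheme $(L,\sim)$ has to simultaneously satisfy all $n$ inequalities $b_k\ge\lfloor n/k\rfloor$, and one must argue via Hall rather than by handling one extremal partition at a time. The short Abel identity $\sum_k k\,a_k=\sum_k b_k$ is then the small but essential bridge that makes both the lower and the upper bound evaluate cleanly to $\sum_{k=1}^n\lfloor n/k\rfloor$ instead of some more opaque expression.
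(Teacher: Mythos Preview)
Your argument is correct and reaches the same bound, but by a genuinely different route from the paper's.

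For the \emph{upper bound} the two constructions coincide: the paper observes that the $i$-th largest class has at most $\lfloor n/i\rfloor$ elements and simply reserves an interval of that many labels for it; your choice $a_k=\lfloor n/k\rfloor-\lfloor n/(k+1)\rfloor$ describes exactly the same multiset of $\sim$-class sizes, since $|\{i:\lfloor n/i\rfloor=k\}|=\lfloor n/k\rfloor-\lfloor n/(k+1)\rfloor$. The paper's phrasing is shorter because the greedy ``put the $i$-th largest $P$-class into the $i$-th block'' already witnesses feasibility without invoking Hall.

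For the \emph{lower bound} the arguments diverge. The paper exhibits an explicit adversarial family $C_1,\dots,C_n$ (where $C_i$ is the balanced partition into $i$ parts) and runs an inductive peeling argument: the labels used for $C_1$ are pairwise ``yes'', hence can occupy at most one class of each $C_i$; strip those classes, pass to $C_2$, and so on, collecting at least $\lfloor n/i\rfloor$ fresh labels at step $i$. Your argument instead extracts structural constraints $b_k\ge\lfloor n/k\rfloor$ simultaneously from the single extremal partitions $P_k$, and then sums them via the Abel identity $|L|=\sum_k k\,a_k=\sum_k b_k$. Your route buys a full characterisation of which structures $(L,\sim)$ are universal (namely $b_k\ge\lfloor n/k\rfloor$ for all $k$), which the paper does not state; the paper's route is more elementary, needing neither Hall's theorem nor any assumption on the query function beyond what is given.

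One small point to tighten: your claim that the query predicate on labels ``must be transitive, since equality-of-equivalence-class on the input is'' does not follow as stated. Transitivity on the input only forces transitivity among labels that co-occur in a \emph{single} labeling, and three labels with $f(\lambda,\mu)=f(\mu,\nu)=1$ need not all appear together. This is easy to repair: either redefine $a_k$ as the number of connected components of the ``yes''-graph whose clique number is exactly $k$ (then $|L|\ge\sum_k k\,a_k$ still holds since a component of clique number $k$ has at least $k$ vertices, and the $P_k$ argument gives $b_k\ge\lfloor n/k\rfloor$ verbatim), or first observe that replacing each component by a maximum clique yields a no-larger universal scheme that \emph{is} an equivalence relation. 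The paper's peeling argument sidesteps this issue entirely, since it only ever uses that the label set of a single $P$-class is a pairwise-``yes'' set.
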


\begin{proof}
Our key observation for the sufficiency is that the $i$-th largest equivalence class contains at most $\lfloor n/i \rfloor$ elements.
For the upper bound, we simply assign labels from the set of integers in the range $[ \sum_{j=1}^{i-1} (\lfloor n/j \rfloor) + 1, \sum_{j=1}^i \lfloor n/j \rfloor ] $ for the $i$-th largest equivalence class, for $i>1$, and integers in the range $[1, n]$ for the largest equivalence class.

To show that this many labels are necessary, consider the collection of $n$ equivalence relations (partitions of an $n$ element set) as below.
The collection $C_i$ contains $i$ sets (equivalence classes) each containing $\lfloor n/i \rfloor$ or $\lceil n/i \rceil $ elements. In particular if $s$ and $t$ are the sizes of two of these classes, then $|s-t| \leq 1$.

Consider the labels assigned by any labeling scheme for the above $n$ collection of equivalence relations.
Note that the labels assigned to the relation $C_1$ can be assigned to at most one class of each of $C_i, i=2$ to $n$. This happens because every pair of elements are in the same equivalence class in $C_1$, and hence we will have a conflict (to answer the equivalence query looking only at the labels) if these labels are assigned to more than one class of $C_i, i=2$ to $n$.
Now remove $C_1$, and all classes from $C_i, i=2$ to $n$ that have been assigned the same labels as of $C_1$. Now the proof follows by repeating the above argument with the labels assigned to the elements of (the remaining classes of) $C_2$, $C_3$ up to $C_n$ in that order. \qed
\end{proof}

To answer the equivalence query in the above labeling scheme, given an integer label $x$, we need to find the largest $i$ such that $\sum_{j=1}^{i-1} \lfloor n/j \rfloor < x$. In order to support this query in constant time, we modify the labeling scheme slightly (and use space slightly suboptimal, up to lower order terms).  We first order the equivalence classes in non-increasing order of their sizes. We give them labels, say $1$ to $c$ where $c$ is the number of classes. Within each class, we give an arbitrary ordering of the elements. Then the label for an element $x$ is given by a pair ($i, j$) where
$i$ is the label of the class to which the element belongs,  and $j$ is its `rank' in the class numbered $i$. As the $i$-th largest equivalence class contains at most $\lfloor n/i \rfloor$ elements, the label $j$ can be represented using $\lceil \lg \lfloor n/i \rfloor \rceil$ bits of space.
The label $i$ is represented using $\lceil \lg i \rceil$ bits. As the size of the representation of $i$ is not fixed, we need to store information to find the `break point' between $i$ and $j$. Hence we `prefix' the label $(i, j)$ by storing the length of $i$ in binary,
using $\lceil \lg \lceil  \lg n \rceil \rceil$ bits.  The equivalence query can easily be answered by looking at the first component ($i$) of the label in constant time. The number of bits used for a label is $\lceil \lg \lceil \lg n \rceil \rceil + \lceil \lg i \rceil + \lceil \lg \lfloor n/i \rfloor \rceil$ which is
at most $\lg n + \lg \lg n + 2$.

>From Theorem~\ref{labelrange}, $\lceil \lg \sum_{i=1}^n \lfloor n/i \rfloor \rceil = \lceil \lg (n \ln n - O(n)) \rceil$ bits are necessary for the label length. Thus we have
\begin{theorem}
Given a partition of an $n$ element set into equivalence classes,  we can assign to each of the elements a label of $\lg n + \lg \lg n + 2$ bits such that the equivalence query can be answered in constant time by looking only at the labels. In this model, $\lg n + \lg \lg n - \Theta (1)$ bits are necessary to represent the labels.
\end{theorem}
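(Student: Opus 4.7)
The plan is to handle the upper and lower bounds separately, both of which follow from ingredients already essentially in place in the excerpt. For the upper bound, I would formalize the scheme sketched in the paragraph immediately preceding the statement. Sort the equivalence classes in non-increasing order of size; let $\mathcal{E}_i$ be the $i$-th class, and assign each element of $\mathcal{E}_i$ the label $(i,j)$, where $j \in \{1,\dots,|\mathcal{E}_i|\}$ is its rank within the class. Because $|\mathcal{E}_i| \leq \lfloor n/i \rfloor$ (the key observation used in the proof of Theorem~\ref{labelrange}), the field $j$ fits in $\lceil \lg \lfloor n/i \rfloor \rceil$ bits and $i$ fits in $\lceil \lg i \rceil$ bits.

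Since the width of the $i$-field varies across labels, I would prefix the bit string encoding $(i,j)$ with the binary representation of the length of the $i$-field, which costs at most $\lceil \lg \lceil \lg n \rceil \rceil$ bits. The total label length is
\[
\lceil \lg \lceil \lg n \rceil \rceil + \lceil \lg i \rceil + \lceil \lg \lfloor n/i \rfloor \rceil \;\leq\; \lg n + \lg \lg n + 2 .
\]
In the word-RAM model, an equivalence query on two such labels is answered in constant time by reading the length prefix of each label, using shifts and masks to extract the two $i$-fields, and testing equality.

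For the lower bound, Theorem~\ref{labelrange} states that a label space of size exactly $S = \sum_{i=1}^{n} \lfloor n/i \rfloor$ is necessary, so every labeling requires at least $\lceil \lg S \rceil$ bits per label. Using $\lfloor n/i \rfloor = n/i - O(1)$ together with the harmonic estimate $H_n = \ln n + \Theta(1)$ gives $S = n\,H_n - \Theta(n) = n \ln n + \Theta(n)$. Taking binary logarithms, and using the constant-shift identity $\lg \ln n = \lg \lg n + \lg \ln 2$, one obtains $\lg S = \lg n + \lg \lg n - \Theta(1)$, which is exactly the claimed lower bound.

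The only real computation is the asymptotic estimate of the harmonic sum; there is no substantive obstacle. The sole delicate bit is keeping track of ceilings when bounding the label length in the construction, and noting that the length-prefix trick is precisely what makes decoding (and hence the equivalence test) constant time on the word RAM.
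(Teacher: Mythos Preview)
Your proposal is correct and follows essentially the same approach as the paper: the upper bound uses the identical $(i,j)$-with-length-prefix encoding based on the observation $|\mathcal{E}_i|\le\lfloor n/i\rfloor$, and the lower bound is derived from Theorem~\ref{labelrange} via the harmonic estimate for $\sum_{i=1}^n \lfloor n/i\rfloor$. Your writeup is somewhat more explicit than the paper's about the constant-time decoding and about the $\lg\ln n = \lg\lg n + \lg\ln 2$ step, but there is no substantive difference in method.
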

\section{Succinct Data Structures}
\label{sds}
Now we move on to designing data structures, where the labels of the $n$ elements can be freely reassigned, but they need to be unique and in the range $1$ to $n$.  The queries can be answered by looking at an augmented data structure. 
We are interested in time and space efficient data structures. We first assign an implicit ordering of the elements. Each element gets a label according to this ordering, and the queries are answered by looking at these labels and an augmented data structure.


First, we address the question of how much space is required to capture the given equivalence class information.
The information theory lower bound for the representation is given by the number of partitions of an $n$ element set into equivalence classes, which is the same as the number of partions of $n$, which by the Hardy-Ramanujan formula~\cite{HR} is
asymptotically ${1 \over 4n \sqrt 3} e^{(\pi \sqrt {2n \over 3})}.$
Hence the information theoretic lower bound for space to represent the equivalence class information is given by $\pi \sqrt {2n/3} \lg e - \lg n + O(1)$ which is $\Theta (\sqrt n)$.

Now, to design space efficient data structures, let $c$ be the number of classes, $s_i, i=1$ to $k$ be the distinct sizes of the classes, and let $n_i$ be the number of classes of size $s_i$ in the given equivalence class. Key to our structure is ordering the classes in non-decreasing order of $\gamma_i= s_i n_i$. I.e. $s_i n_i \leq s_{i+1} n_{i+1}$, for $i=1$ to $k-1$.
We first make the simple observations that
\begin{equation}
\label{one}
\sum_{i=1}^k s_i n_i = n, \sum_{i=1}^k n_i = c  ~{\rm and}~ s_i n_i \geq i ~{\rm for} ~ i=1 ~ {\rm to}~ k
\end{equation}
The last inequality follows as the $i$-th smallest $s_i$ value is at least $i$.
It follows from these observations that that $k \leq c$ and $k \leq \sqrt {2n}$.
\subsection{Structure using $O(\sqrt n)$ bits}
\label{optspace}
Here, we design a structure that uses $O(\sqrt n)$ bits of space to represent the equivalence class information, and can support equivalence query in $O(\lg n)$ time.
Our primary structure consists of two sequences:
\begin{itemize}
\item
the sequence $s$ that consists of $\delta_i = s_i n_i - s_{i-1} n_{i-1}$, $i=1$ to $k$, where $s_0n_0$ is defined to be $0$ and
\item
the sequence $m$ that consists of $n_i, i=1$ to $k$.
\end{itemize}

Each element in these sequences is represented in binary (using respectively 
$1 + \lceil \lg (\delta_i+1) \rceil$ and $1+ \lceil \lg (n_i +1) \rceil$ bits). As the lengths of each element in the sequence vary, we store two other sequences that `shadow' the two primary sequences.
The first one $\psi$ has a $1$ at the starting point of each element in the sequence $s$ and $0$ at other positions. Similarly, the second one $\rho$ stores a $1$ at the starting point of elements of the sequence $m$, and $0$ at other positions. We also store a select structure (see for example~\cite{RRR07,haste})
on these two sequences $\psi$ and $\rho$ to identify the $1$s quickly. The space occupied by each of these two sequences is clearly the same as that occupied by the two primary sequences, plus lower order terms.

The first sequence gives an implicit ordering of the elements, i.e. the
elements in the first $n_1$ classes are assigned label values $1$ to $s_1n_1$, the elements of the next $n_2$ classes are assigned the next $s_2n_2$ label values and so on.
%

We first claim that the space occupied by these four sequences is $O(\sqrt n)$ bits. We first show the following Lemma.
If any $\delta_j=0$, then we account for $1$ bit for its representation and 
as $k$ is $O(\sqrt n)$, this doesn't affect the claimed bound; so assume that $\delta_j \geq 1$ for all $j$ in the sum below.
\begin{lemma}
$\sum_{j=1}^k \lg \delta_j$ is $O(\sqrt n)$ where each $\delta_j$ (as defined above) is at least $1$.
\end{lemma}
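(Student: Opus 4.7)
The plan is to translate the constraint $\sum_{i=1}^k s_i n_i = n$ from equation~(\ref{one}) into a \emph{weighted} linear constraint on the $\delta_j$'s, and then bound $\sum_j \lg \delta_j$ by a standard concavity argument. Since $\delta_j$ is a telescoping difference of the $s_i n_i$'s, we have $s_i n_i = \sum_{j \le i} \delta_j$, and swapping the order of summation yields
\[
n \;=\; \sum_{i=1}^k s_i n_i \;=\; \sum_{j=1}^k (k-j+1)\,\delta_j.
\]

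I would next rescale by setting $\tilde\delta_j = (k-j+1)\,\delta_j$, turning the weighted constraint into the flat one $\sum_j \tilde\delta_j = n$ at the cost of a logarithmic correction $\sum_j \lg \delta_j = \sum_j \lg \tilde\delta_j - \lg k!$. Concavity of $\lg$ (Jensen / AM--GM) gives $\sum_j \lg \tilde\delta_j \le k \lg(n/k)$, and the elementary Stirling lower bound $k! \ge (k/e)^k$ gives $\lg k! \ge k \lg k - k \log_2 e$. Combining these,
\[
\sum_{j=1}^k \lg \delta_j \;\le\; k \lg(n/k) - k \lg k + k \log_2 e \;=\; k \lg(ne/k^2).
\]

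Finally, using $k \le \sqrt{2n}$ (also from equation~(\ref{one})), a short one-variable calculus check shows that $k \mapsto k \lg(ne/k^2)$ attains its maximum over $[1,\sqrt{2n}]$ at $k = \sqrt{n/e}$, with value $(2 \log_2 e / \sqrt e)\,\sqrt n = O(\sqrt n)$, completing the argument.

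The step I expect to be easiest to overlook is the first one. Applying Jensen directly to $\sum_j \lg \delta_j$ with only the crude bounds $\sum_j \delta_j \le n$ and $k \le \sqrt{2n}$ yields the substantially weaker estimate $O(\sqrt n \,\lg n)$. The extra $\lg n$ factor is saved precisely by using the heavier weights $(k-j+1)$ inherited from $\sum_i s_i n_i = n$, since they penalize placing mass at the small indices $j$ where the weight $(k-j+1)$ is largest.
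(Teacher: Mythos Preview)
Your proof is correct and takes a genuinely different route from the paper's.

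The paper argues via a level-set/counting claim: for any threshold $i$, at most $\sqrt{2n/i}$ of the $\delta_j$ can exceed $i$ (because if $\delta_{j_1},\dots,\delta_{j_b}\ge i$ with $j_1<\cdots<j_b$, then $s_{j_t}n_{j_t}\ge ti$ and summing gives $ib(b+1)/2\le n$). Summing $\lg\delta_j$ over dyadic bands $2^{p-1}\le\delta_j<2^p$ then gives $\sum_j\lg\delta_j\le 2\sqrt{n}\sum_{p\ge 1} p\,2^{-p/2}=O(\sqrt n)$. Your argument instead exploits the exact identity $\sum_j (k-j+1)\delta_j=n$ obtained by swapping the order of summation in $\sum_i s_in_i$, rescales to flatten the constraint, and then uses Jensen plus the Stirling bound $k!\ge(k/e)^k$ to get the closed-form upper bound $k\lg(ne/k^2)$, which is maximized at $k=\sqrt{n/e}$ and is $O(\sqrt n)$. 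The paper's approach is purely combinatorial and reusable for the companion sum $\sum_j\lg n_j$ with essentially the same claim; yours is shorter and gives a cleaner constant, and your identification of the weighted constraint $\sum_j (k-j+1)\delta_j=n$ as the key ingredient (versus the naive $\sum_j\delta_j\le n$) is exactly the right diagnosis of where the extra $\lg n$ factor would otherwise creep in.
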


\begin{proof}
We use the following claim to achieve the desired bound.\\

\noindent
{\bf Claim:} For an integer $1 \leq i \leq n$, the number of $j$'s such that $\delta_j \geq i$ is at most $\sqrt {2n/i}$.

\noindent
{\bf Proof of claim:} Let $\delta_{j_t} \geq i$, for some $t=1$ to $b$. Then
$s_{j_t} n_{j_t} \geq ti$, and hence
$$ \sum_{t=1}^b ti \leq \sum_{t=1}^b s_{j_t} n_{j_t} \leq n$$
from which it follows that $b(b+1)/2 \leq n/i$ or $b \leq \sqrt {2n/i}$ which proves the claim. \qed

\noindent
>From the claim, it follows that (by breaking the $\delta$ values into ranges of powers of two -- i.e. those between $2^{p-1}$ and $2^p$ for various values of $p$)

$$\sum_{j=1}^k \lg \delta_j  \leq \sum_{p=1}^{\lceil \lg n \rceil} (\sqrt {2n/2^{(p-1)}}) p = 2 \sqrt {n} \sum_{p=1}^{\lceil \lg n \rceil} {p \over 2^{p/2}}$$
which is $O(\sqrt n)$.
\qed
\end{proof}

A similar proof shows that $\sum_{j=1}^k \lg n_j$ is $O(\sqrt n)$.
This is because if $n_j =i$ for some $j$, then $s_j n_j \geq ji$, and a claim as above follows for the number of $j$'s with $n_j=i$ as well.
Thus we have a structure to represent the equivalence class information that uses $O(\sqrt n)$ bits. \\

\noindent
{\bf Implementing the equivalence query}
Now, given an element labeled $x$, the equivalence class it belongs to is determined by first finding the predecessor $p(x)$ of $x$, which is $max \{j| \sum_{i=1}^j s_i n_i < x\}$. Given two elements $x$ and $y$, if $p(x)$ and $p(y)$ are not the same, then $x$ and $y$ are not in the same equivalence class.

If $p(x)$ and $p(y)$ are the same, then we know that $x$ and $y$ are in classes that have the same sizes, but it is still not clear whether they are in the same equivalence class. They are in the same equivalence class if and only if
$\lceil {(x-\sum_{i=1}^{p(x)} s_i  n_i)/n_{p(x)+1}} \rceil $ and
$\lceil {(y-\sum_{i=1}^{p(y)} s_i  n_i)/n_{p(y)+1}} \rceil $ are the same. To compute the $n_i$ value for some $i$, we simply look for the $i$-th and $(i+1)$-st $1$ in the sequence $\rho$ (using the select data structure on $\rho$) which gives the starting position and the length of the representation of $n_i$ in the sequence $m$.

Now in order to support the predecessor queries in a reasonable amount of time, we store more: we simply store the $\sum_{j=1}^i s_j n_j$ for every value of $i$ which is a multiple of $\lg n$. This takes $O(\sqrt n)$ bits.

Now $p(x)$ can be obtained by doing a binary search for $x$ on these partial
sum values  $\sum_{j=1}^i s_j n_j$ for every value of $i$ which is a multiple of $\lg n$.
Once an $O(\lg n)$ range of the predecessor is found, the actual predecessor value is found by doing a linear search on the delta values in this range. As before, the lengths and the starting positions of the $\delta$ values can be found using the select substructure on the sequence $\psi$.  Thus we have
\begin{theorem}
Given a partition of an $n$ element set into equivalence classes, it can be stored using $O(\sqrt n)$ bits such that the equivalence query can be answered in $O(\lg n)$ time. Furthermore, $\Omega (\sqrt n)$ is the minimum number of bits necessary to store the equivalence class information on an $n$ element set.
\end{theorem}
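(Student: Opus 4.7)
The plan is to establish both directions of the theorem. For the lower bound, I would simply invoke the information-theoretic bound already derived from the Hardy--Ramanujan partition asymptotics, which yields $\pi \sqrt{2n/3} \lg e - \lg n + O(1) = \Omega(\sqrt n)$ bits; no additional work is needed here.

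For the upper bound, I would build on the primary structure already introduced: the sequence $s = \delta_1, \ldots, \delta_k$ of gap-encoded cumulative sizes, the sequence $m = n_1, \ldots, n_k$, and their marker bit-vectors $\psi$ and $\rho$ that record the starting positions of each variable-length entry (equipped with constant-time select). First I would bound the space of these four sequences. The dominant contributions are $\sum_j \lceil \lg(\delta_j+1)\rceil$ and $\sum_j \lceil \lg(n_j+1)\rceil$; by the lemma already proved (plus its analogue for $n_j$, obtained by the same counting argument since $n_j \geq i$ also forces $s_j n_j \geq ji$), both sums are $O(\sqrt n)$. The $O(k) = O(\sqrt n)$ overhead for boundary markers, for the $+1$ in each length, and for zero-$\delta$ entries contributes only lower-order terms, so all four sequences together occupy $O(\sqrt n)$ bits.

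Next I would support the equivalence query in $O(\lg n)$ time. The query on labels $x, y$ reduces to (i) computing the block index $p(x) = \max\{j : \sum_{i=1}^j s_i n_i < x\}$ (and similarly for $y$), and (ii) if $p(x) = p(y)$, comparing the within-block class indices $\lceil (x - \sum_{i\leq p(x)} s_i n_i)/n_{p(x)+1}\rceil$ and the analogous quantity for $y$. To locate $p(x)$ quickly, I would store the partial sums $\sum_{j=1}^i s_j n_j$ explicitly for every $i$ that is a multiple of $\lg n$; since $k \leq \sqrt{2n}$, this samples $O(\sqrt n / \lg n)$ values of $O(\lg n)$ bits each, for $O(\sqrt n)$ additional bits. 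A binary search over these samples pinpoints $p(x)$ to a window of $\lg n$ consecutive $\delta$ entries in $O(\lg n)$ time, and a linear scan over that window, using select on $\psi$ to delimit the variable-length entries, finishes the job. Retrieving $n_{p(x)+1}$ is a single select lookup on $\rho$ followed by one word read, and the within-block arithmetic is constant time on the word RAM.

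The main obstacle is the space analysis of the $\delta$-sequence: the naive bound $k \lg(n/k) = \Theta(\sqrt n \lg n)$ is off by a $\lg n$ factor, so one genuinely needs the combinatorial observation that large $\delta_j$ values are rare (each $\delta_j \geq i$ forces contribution at least $i$ to the telescoping sum $\sum s_j n_j = n$, capping the count at $\sqrt{2n/i}$), which after summing by dyadic ranges gives the geometric series $2\sqrt n \sum_p p/2^{p/2} = O(\sqrt n)$. Everything else---the marker vectors with constant-time select, the sampled partial sums, and the two-level predecessor search---is routine once this bound is in hand.
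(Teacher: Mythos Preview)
Your proposal is correct and follows the paper's own proof essentially step for step: the same gap-encoded sequences $s,m$ with marker vectors $\psi,\rho$, the same $O(\sqrt n)$ space analysis via the dyadic-range lemma on the $\delta_j$ (and its analogue for the $n_j$), the same sampled partial sums at multiples of $\lg n$ with binary search plus linear scan for the predecessor, and the same Hardy--Ramanujan lower bound. The only quibble is your parenthetical justification of the $\sqrt{2n/i}$ bound---merely saying each large $\delta_j$ contributes $i$ to the sum would give $n/i$, not $\sqrt{2n/i}$; the paper's argument uses that the $t$-th such index satisfies $s_{j_t}n_{j_t}\ge ti$, whence $\sum_t ti \le n$---but since you explicitly defer to the already-proved lemma this is harmless.
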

\subsection{Faster, Space-Efficient Methods}
\label{opttime}
Here we develop a data structure where the equivalence query can be answered in constant time albeit using $O(\sqrt n \lg n)$ bits of space.

Our initial representation consists of storing
\begin{itemize}
\item
the sequence $\sum_{j=1}^i s_j n_j$, $i=1$ to $k$, and
\item
the sequence $n_i, i=1$ to $k$,
\end{itemize}

where each number in each sequence is represented in binary using $\lceil \lg n \rceil$ bits.
As before, the first sequence gives an implicit ordering of the elements. That is, the $s_1 n_1$ elements of the first $n_1$ classes form the {\it first} $s_1 n_1$ elements and so on.
The total space used by the four sequences is
at most $2 \sqrt {2n} \lceil \lg n \rceil$ bits.
As discussed earlier, to answer the equivalence queries, we essentially have to support predecessor queries in the sequence $\sum_{j=1}^i s_j n_j$, $i=1$ to $k$.

%
%

A simple binary search can support the predecessor query in $O(\lg k)$ time.
A y-fast trie~\cite{Willard} can support the predecessor query in $O(\lg \lg n)$ time. As in the scheme of the previous subsection, we could store the complete partial sums and a y-fast trie structure storing every $\lg \lg n$-th element  in the partial sum sequence and store the $\delta$ values for the remaining elements of the sequence. This will help us find a range of $\lg \lg n$ for the predecessor in $O(\lg \lg n)$ time. Within the range, we can do a sequential search for the predecessor using the $\delta$ values. As the delta values require only $O(\sqrt n)$ bits of space,
we have
\begin{theorem}
Given a partition of an $n$ element set into equivalence classes, it can be stored using
$O(\sqrt n \lg n/\lg \lg n)$ bits such that the equivalence query can be answered in $O(\lg \lg n)$ time.
\end{theorem}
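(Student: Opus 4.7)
The plan is to refine the predecessor search underlying the equivalence query by replacing the binary search on the partial-sum sequence with a two-level structure: a y-fast trie~\cite{Willard} on a sparse sample of partial sums, together with the compressed $\delta$-sequence from Subsection~\ref{optspace} to fill in the gaps.

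First, I would form the partial sums $S_i = \sum_{j=1}^{i} s_j n_j$, sample every $\lceil \lg \lg n \rceil$-th index, and insert the $O(k/\lg\lg n) = O(\sqrt n/\lg\lg n)$ sampled values into a y-fast trie over the universe $\{1,\ldots,n\}$. A y-fast trie on $m$ elements in a universe of size $n$ uses $O(m)$ words and supports predecessor queries in $O(\lg\lg n)$ time, contributing $O(\sqrt n \lg n/\lg\lg n)$ bits. For the non-sampled entries I would keep only the $\delta_j$ values in the same variable-length encoding as in Subsection~\ref{optspace}, shadowed by the bit-vector $\psi$ equipped with a select structure; by the lemma proved there this accounts for $O(\sqrt n)$ bits. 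The sequence of $n_i$ values is likewise stored in the compressed form with shadow $\rho$, again $O(\sqrt n)$ bits. Summing the three contributions gives total space $O(\sqrt n \lg n/\lg\lg n)$.

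To answer a query on labels $x$ and $y$, I would first compute $p(x)$ and $p(y)$. A predecessor query in the y-fast trie pins $p(x)$ to a block of at most $\lceil \lg\lg n \rceil$ consecutive indices. Starting from the sampled partial sum returned by the trie, I accumulate the $\delta_j$ values one by one, using select on $\psi$ to locate each binary representation in the sequence $s$, and halt at the first index whose running sum is $\ge x$; this walk runs in $O(\lg\lg n)$ time. Given $p(x)$ and $p(y)$, equivalence is decided by the same ceiling computation as in Subsection~\ref{optspace}, and $n_{p(x)+1}$ is read in $O(1)$ time via select on $\rho$.

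The main obstacle is ensuring that the within-block scan remains synchronized with the y-fast trie sample without paying extra space per sample for an offset pointer. Since the stored value at the sample is exactly $S_{r \lceil \lg\lg n \rceil}$ and the $\delta$-sequence already encodes the successive differences, the scan can be initialized with this exact cumulative value and walk through the correct contiguous stretch of $\delta$'s; no additional synchronization data is required, so the space and time bounds go through as claimed.
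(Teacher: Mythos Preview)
Your proposal is correct and follows essentially the same route as the paper: a y-fast trie on every $\lg\lg n$-th partial sum to locate the predecessor within a block of $O(\lg\lg n)$ indices, followed by a linear scan over the compressed $\delta$-sequence inside that block. The only thing you leave implicit is how to recover the block index (i.e., the starting position in the $\delta$-sequence) from the trie's answer, but storing the sample index as satellite data with each trie key handles this within the stated space bound.
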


A fully indexable dictionary~\cite{RRR07} with the improved redundancy of ~\cite{haste} can support the predecessor query in constant time albeit using $O({\sqrt n}^{1+\epsilon})$ bits of space.
However, we argue below that the predecessor can be supported in constant time using an additional $O(\sqrt n \lg n)$ bits using the fact that our sequence satisfies the last inequality in equation (\ref{one}) and hence is special.
In addition to the two sequences above, we store an array $A$ of
$ \lceil {\sqrt {2n}} \rceil $ pointers, where $A[i] = max \{j|\sum_{t=1}^j s_t n_t \leq i(i+1)/2 \}$, for $i=1$ to $\lceil \sqrt {2n} \rceil$.
Now, we claim
\begin{lemma}
The predecessor $p(x)$ of an integer $x$ ($1 \leq x \leq n$) in the sequence $\sum_{t=1}^i s_t n_t, i=1$ to $k$ is $A[\lceil \sqrt {2x} \rceil -1]$ or
$A[\lceil \sqrt {2x} \rceil -1] -1$ or  $A[\lceil \sqrt {2x} \rceil -1] +1$.
\end{lemma}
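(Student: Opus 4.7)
Set $\sigma_j := \sum_{t=1}^{j} s_t n_t$ and $q := \lceil \sqrt{2x}\rceil - 1$, so that $p(x) = \max\{j : \sigma_j < x\}$ and $A[q] = \max\{j : \sigma_j \leq q(q+1)/2\}$. The claim is equivalent to $|p(x) - A[q]| \leq 1$.

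My plan is to exploit two complementary bounds on $\gamma_t := s_t n_t$. The first is the pointwise estimate $\gamma_t \geq t$ from~(\ref{one}). The second is the monotonicity $\gamma_1 \leq \gamma_2 \leq \cdots$ coming from our ordering, which through $\sigma_j = \sum_{t \leq j} \gamma_t \leq j\gamma_j$ yields the ``averaged'' bound $\gamma_j \geq \sigma_j/j$. These combine via AM--GM: whenever $\gamma_j \geq \max(j,\, S/j)$ for some quantity $S$, we get $\gamma_j \geq \sqrt{S}$. Applied with $S$ close to $x$, this will say that consecutive $\sigma$-values are separated by roughly $\sqrt{x}$ in the relevant range, while the ``window'' in which both $\sigma_{p(x)+1}$ and $\sigma_{A[q]}$ live has length only $\sqrt{x/2}$, ruling out any gap of two.

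I first record the elementary estimate $|q(q+1)/2 - x| \leq \sqrt{x/2}$, which follows from $q < \sqrt{2x} \leq q+1$. Then split into Case A, $x \leq q(q+1)/2 + 1$ (where $p(x) \leq A[q]$), and Case B, $x > q(q+1)/2 + 1$ (where $p(x) \geq A[q]$). For Case A, assume for contradiction $A[q] \geq p(x)+2$, so $\sigma_{p(x)+2} \leq q(q+1)/2$; this forces $\gamma_{p(x)+2} \leq q(q+1)/2 - x \leq \sqrt{x/2}$. But the lower bounds $\gamma_{p(x)+2} \geq p(x)+2$ and $\gamma_{p(x)+2} \geq \gamma_{p(x)+1} \geq \sigma_{p(x)+1}/(p(x)+1) \geq x/(p(x)+1)$ combine via AM--GM to give $\gamma_{p(x)+2} \geq \sqrt{x}$, contradicting the upper bound since $\sqrt{x} > \sqrt{x/2}$. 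Case B is the mirror argument applied to the jump $\gamma_{A[q]+2}$: now $\sigma_{A[q]+1} > q(q+1)/2$ and $\sigma_{A[q]+2} < x$ force $\gamma_{A[q]+2} \leq x - q(q+1)/2 - 2 \leq \sqrt{x/2} - 2$, whereas AM--GM on $\gamma_{A[q]+2} \geq A[q]+2$ and $\gamma_{A[q]+2} \geq \sigma_{A[q]+1}/(A[q]+1) > q(q+1)/(2(A[q]+1))$ gives $\gamma_{A[q]+2} \geq \sqrt{q(q+1)/2} \geq \sqrt{x/2}$ (for $x \geq 2$), again contradicting the window length.

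The main obstacle I expect is the bookkeeping of integer-arithmetic constants: keeping the strict and non-strict inequalities consistent across the integer/non-integer subcases of $\sqrt{2x}$, and handling the tiny edge cases $x \in \{1,2\}$ (where $q=1$ and only Case A is non-vacuous). The AM--GM step itself is clean; it is precisely the gap between the ``spacing scale'' $\sqrt{x}$ and the ``window scale'' $\sqrt{x/2}$ that forces the predecessor to lie within one of $A[q]$.\qed
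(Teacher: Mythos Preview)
Your argument is correct and follows the same two–step skeleton as the paper: first pin down how close $q(q+1)/2$ is to $x$, then show that the partial sums $\sigma_j$ increase fast enough near the predecessor that at most one of them can land in that window.

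Where you genuinely differ from the paper is in the second step. The paper simply asserts that from $\sigma_j>i(i+1)/2$ one gets $s_jn_j\ge i+1$ (and similarly $s_{l+1}n_{l+1}\ge\lceil\sqrt{2x}\rceil-1$ from $\sigma_{l+1}>x$), and then chains two such increments to overshoot the window. You instead combine the pointwise bound $\gamma_j\ge j$ from~(\ref{one}) with the averaged bound $\gamma_j\ge\sigma_j/j$ coming from monotonicity, and use $\max(a,b)\ge\sqrt{ab}$ to obtain $\gamma_j\ge\sqrt{x}$ (resp.\ $\ge\sqrt{q(q+1)/2}$). Your spacing constant is \emph{weaker} than the one the paper claims ($\sqrt{x}$ versus roughly $\sqrt{2x}$), but it is rigorously derived and still dominates the window width $\sqrt{x/2}$, which is exactly what is needed. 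In effect your AM--GM step is the missing justification behind the paper's bald ``hence $s_jn_j\ge i+1$'', which does not follow from~(\ref{one}) and monotonicity alone. Your case split on the sign of $x-q(q+1)/2-1$ is also cleaner than the paper's handling of the two directions. The only cosmetic point is that the boundary cases you flag ($x\le 2$, and the possibility $q(q+1)/2-x<1$ in Case~A) are indeed harmless, since in Case~A a nonpositive upper bound on $\gamma_{p(x)+2}$ already contradicts $\gamma_{p(x)+2}\ge p(x)+2\ge2$, and Case~B forces $x\ge q(q+1)/2+2\ge3$.
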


\begin{proof}

Let $i = \lceil \sqrt {2x} \rceil - 1$, then $$x - (\sqrt x)/2 \leq i(i+1)/2 < x + (\sqrt x)/2, $$ and $$x + (\sqrt x)/2 \leq (i+1)(i+2)/2 < x + 3(\sqrt x)/2.$$

For $j = A[i] +1$,
$\sum_{t=1}^{j} s_t n_t > i(i+1)/2$ (by definition of $A[i]$). Hence $s_j n_j \geq (i+1)$ and hence $s_{j+1} n_{j+1} \geq (i+2)$ hence $\sum_{t=1}^{j+1} s_t n_t > i(i+1)/2 + i + 2 > x$ and hence $p(x) \leq j = A[i] +1$.

Let $l = p(x)$. Then $\sum_{t=1}^{l+1} s_t n_t > x$ and hence $s_{l+1} n_{l+1} \geq \lceil \sqrt {2x} \rceil -1$. Hence $\sum_{t=1}^{l+2} s_t n_t \geq x + \lceil (\sqrt {2x}) \rceil > i(i+1)/2$. Hence $A[i] \leq l+1 = p(x) + 1$ which implies that $A[i]+1 \geq p(x) \geq A[i]-1$. \qed
\end{proof}

The actual value of $p(x)$ can be computed by looking at the sum up to each of these three values. \\

\noindent
{\bf Computing Square Roots} Note that computing $\lceil \sqrt {x} \rceil$ is not a constant time operation in the standard word RAM model. The standard Newton's iterative method uses $\Theta (\lg \lg n)$ operations. We describe a space efficient method that avoids explicit computation of square roots (for the range we are interested in) by using a look up to precomputed tables. We use two tables,
one when the number of digits of $x$ (up to its most significant $1$) is odd, denoted by $O$, and one when the number of digits is even, denoted by $E$. It turns out that $O[i]$ and $E[i]$ are quite close in value, where $E[i]$ is roughly a $\sqrt{2}$ factor larger than $O[i]$.

For $i=1$ to $\lceil \sqrt {2n} \rceil$, we precompute and store in 
$E[i]$, the value of $\lceil \sqrt {i 2^{(\lceil \lg (i+1) \rceil)/2}} \rceil$
and in $O[i]$, the value of $\lceil \sqrt {i 2^{(\lceil \lg (i+1) \rceil)/2 - 1}} \rceil$. This takes $O(\sqrt n \lg n)$ bits.
Now, given an integer $i$, $1 \leq i \leq 2n$, we compute $\lceil \sqrt {i} \rceil$ as follows. Let $i = a_i 2^{\lceil (\lg i)/2 \rceil} + b_i$ where $b_i < 2^{\lceil (\lg i)/2 \rceil}$. Then,
\begin{lemma} $\lceil \sqrt i \rceil = E[a_i]$ or $E[a_i+1]$ if the number of digits in $i$ (up to its most significant $1$) is even, and is $O[a_i]$ or $O[a_i+1]$ otherwise.
\end{lemma}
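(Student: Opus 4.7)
My plan is to sandwich $\sqrt{i}$ between two quantities whose ceilings coincide with two consecutive table entries named in the lemma, identify those ceilings with the appropriate $E$ or $O$ values, and then bound the gap between them by one, forcing $\lceil \sqrt i \rceil$ to equal one of them.

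The sandwich is elementary: from the decomposition $i = a_i 2^s + b_i$ with $0 \leq b_i < 2^s$ and $s = \lceil (\lg i)/2 \rceil$ I get $a_i \cdot 2^s \leq i < (a_i + 1)\cdot 2^s$, so
\[
\lceil \sqrt{a_i \cdot 2^s}\,\rceil \;\leq\; \lceil \sqrt{i}\,\rceil \;\leq\; \lceil \sqrt{(a_i + 1)\cdot 2^s}\,\rceil.
\]

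The key step, and the main obstacle, is to match the two outer ceilings with $E[a_i], E[a_i+1]$ in the even-bit-length case (and with $O[a_i], O[a_i+1]$ in the odd case). A short case analysis on the parity of $d = \lceil \lg(i+1)\rceil$ pins down both $s$ and $\lceil \lg(a_i+1)\rceil$: for $d$ even, $s = d/2$ and $\lceil \lg(a_i+1)\rceil = s$, so $a_i \geq 2^{s-1}$; for $d$ odd with $i > 2^{d-1}$, $s = (d+1)/2$ and $\lceil \lg(a_i+1)\rceil = s - 1$, so $a_i \geq 2^{s-2}$. Substituting these identities into the definitions of $E[\cdot]$ and $O[\cdot]$ aligns the shift attached to $a_i$ with the desired $2^s$, yielding $\lceil \sqrt{a_i \cdot 2^s}\,\rceil = E[a_i]$ and $\lceil \sqrt{(a_i+1)\cdot 2^s}\,\rceil = E[a_i+1]$ in the even case, and the analogous identities with $O$ in the odd case. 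Two boundary situations need separate treatment: the jump $\lceil \lg(a_i+2)\rceil > \lceil \lg(a_i+1)\rceil$ when $a_i + 1$ is itself a power of two, and the degenerate case $i = 2^{d-1}$ with $d$ odd; both are dispatched by direct evaluation at the candidate indices.

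Finally, the conjugate identity gives
\[
\sqrt{(a_i + 1)\cdot 2^s} - \sqrt{a_i \cdot 2^s} \;=\; \frac{2^s}{\sqrt{(a_i+1)\cdot 2^s}+\sqrt{a_i \cdot 2^s}} \;<\; \frac{2^{s/2}}{2\sqrt{a_i}},
\]
which is strictly less than $1$ thanks to $a_i \geq 2^{s-2}$ from the case analysis above. Hence the two outer ceilings differ by at most one, and $\lceil \sqrt i \rceil$ must coincide with one of the two named table entries.
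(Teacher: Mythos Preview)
Your core argument---the sandwich $\lceil \sqrt{a_i 2^s}\rceil \le \lceil \sqrt{i}\rceil \le \lceil \sqrt{(a_i+1)2^s}\rceil$ coming straight from the decomposition $i=a_i 2^s+b_i$---is exactly the paper's proof; the paper writes that inequality and stops with ``which is what we wanted to show.''  Where you go further is in (i) the parity case analysis verifying that the two endpoints really are the stored values $E[a_i],E[a_i+1]$ (resp.\ $O[a_i],O[a_i+1]$), since the shift in the table definition depends on $\lceil\lg(\cdot+1)\rceil$ of the \emph{index}, and (ii) the conjugate-identity bound showing the two endpoints differ by at most $1$, so the sandwich actually pins $\lceil\sqrt{i}\rceil$ to one of them.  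The paper leaves both of these steps implicit; your version is the same route, just carried out more carefully.
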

\begin{proof}
As $i = a_i 2^{\lceil (\lg i)/2) \rceil} + b_i$,
$ a_i 2^{\lceil (\lg i)/2 \rceil} \leq i < (a_i+1) 2^{\lceil (\lg i)/2 \rceil}$,
and hence
$\lceil \sqrt {a_i 2^{\lceil (\lg i)/2 \rceil}} \rceil \leq \lceil \sqrt i \rceil \leq \lceil \sqrt {(a_i+1) 2^{\lceil (\lg i)/2 \rceil}} \rceil$
which is what we wanted to show.
\qed
\end{proof}

The actual value of $\lceil \sqrt i \rceil$ can be computed by squaring the values in the table and comparing them with $i$. Note that for $i \leq 2n$, $a_i \leq \lceil \sqrt {2n} \rceil$, and it can be obtained as follows: find the most significant bit, say bit $r$, mask the lower $r$ bits to keep only the higher half of them, i.e. $\lfloor {r\over 2}\rfloor$ of the bits (without the leading zeroes), and finally shifting them to the right by $\lceil {r\over 2}\rceil$. The most significant bit can be found in constant time with the standard RAM operations, see~\cite{FW}.
Thus we have
\begin{lemma}
For $1 \leq i \leq n$, $\lceil \sqrt i \rceil$ can be computed in constant time (for each $i$) using a precomputed table of $O(\sqrt n \lg n)$ bits.
\end{lemma}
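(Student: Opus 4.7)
The plan is to assemble the ingredients prepared in the preceding paragraphs into a constant-time lookup procedure, so the bulk of the work is already handled by the previous lemma. I will first verify the space bound, then describe the constant-time algorithm, and finally note where the bookkeeping is delicate.

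For the space bound, each of the two tables $E$ and $O$ has $\lceil \sqrt{2n}\,\rceil$ entries, and each entry stores an integer bounded by $\lceil \sqrt{2n}\,\rceil$, so $O(\lg n)$ bits per entry suffice. Summing, the total space is $O(\sqrt{n}\lg n)$ bits, matching the claim.

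For the time bound, given $i$ with $1 \leq i \leq n$, I would first find in constant time the position $r$ of the most significant $1$-bit of $i$ using the standard word-RAM trick cited as~\cite{FW}. The parity of $r+1$ (the bit-length of $i$) selects which of the two tables $E$ or $O$ to consult. Next, I would extract $a_i$, namely the top $\lfloor r/2 \rfloor + 1$ bits of $i$ read as an integer, with a single mask followed by a right shift of $\lceil r/2 \rceil$ positions. By the preceding lemma, $\lceil \sqrt{i}\,\rceil \in \{T[a_i],\, T[a_i+1]\}$, where $T$ is the table chosen above. I would disambiguate by squaring each of the two candidates and comparing with $i$, returning the smaller candidate whose square is at least $i$; squaring and comparison of $O(\lg n)$-bit integers are constant-time word-RAM operations, so the whole procedure runs in $O(1)$ time.

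The only place where care is needed, and hence what I would flag as the main obstacle, is reconciling the three indexing conventions in play: the exponent $\lceil \lg(i+1) \rceil$ that defines $E$ and $O$, the exponent $\lceil (\lg i)/2 \rceil$ used in the decomposition $i = a_i\, 2^{\lceil (\lg i)/2 \rceil} + b_i$ of the previous lemma, and the bit position $r$ returned by the MSB primitive. Once these are aligned so that the table entry fetched matches the value targeted by the previous lemma, and once it is verified that the two candidate entries $T[a_i]$ and $T[a_i+1]$ do not overflow the table size $\lceil \sqrt{2n}\,\rceil$ (which is immediate from $i \leq n$), no further computation is required and the proof is complete.
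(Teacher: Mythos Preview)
Your proposal is correct and follows essentially the same approach as the paper: the lemma is stated as a summary of the preceding discussion, and the paper's implicit proof is exactly the assembly you describe---build the two tables $E$ and $O$ of size $\lceil\sqrt{2n}\,\rceil$, extract $a_i$ in constant time via the MSB primitive of~\cite{FW} followed by a mask and shift, apply the preceding lemma to narrow $\lceil\sqrt i\,\rceil$ to two table entries, and disambiguate by squaring. Your caveat about aligning the indexing conventions is well placed but does not differ from what the paper glosses over.
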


Indeed using this approach to provide a seed for Newton iteration, one can compute $\lceil \sqrt i \rceil$, for $i=1$ to $n$ in time $O(\lg (1/\epsilon))$ using a table of $O((n^{\epsilon} \lg n) /\epsilon)$ bits, for any positive constant $\epsilon < 1$.  To summarize, we have
\begin{theorem}
\label{consttime}
Given a partition of an $n$ element set into equivalence classes, the partition can be represented using $O(\sqrt n \lg n)$ bits such that the equivalence query can be answered in constant time.
\end{theorem}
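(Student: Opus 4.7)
The plan is to assemble the four ingredients already developed in this subsection and verify that together they give the claimed bounds. First I would fix the stored components: (i) the array of $k \leq \sqrt{2n}$ partial sums $\sigma_i = \sum_{j=1}^i s_j n_j$ in fixed-width $\lceil \lg n \rceil$-bit cells so that $\sigma_i$ is accessible in $O(1)$ time; (ii) the array of multiplicities $n_i$ stored the same way; (iii) the pointer array $A$ of $\lceil \sqrt{2n}\rceil$ entries from the predecessor lemma; (iv) the two square-root lookup tables $E$ and $O$. A one-line accounting, using $k \leq \sqrt{2n}$ and the table sizes already established, shows each component takes $O(\sqrt n\,\lg n)$ bits and hence so does their sum.

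Next I would describe the constant-time query. Given labels $x$ and $y$, first compute $\lceil \sqrt{2x}\rceil$ and $\lceil \sqrt{2y}\rceil$ in $O(1)$ time using the square-root lemma (one table lookup plus at most one squaring-and-compare to decide between the two candidates). Feed $\lceil\sqrt{2x}\rceil - 1$ into $A$ and invoke the predecessor lemma: $p(x)$ is one of the three consecutive indices $A[\lceil\sqrt{2x}\rceil-1]-1,\ A[\lceil\sqrt{2x}\rceil-1],\ A[\lceil\sqrt{2x}\rceil-1]+1$. Reading the corresponding at most three $\sigma$-values by random access and comparing against $x$ pins down $p(x)$ in $O(1)$; repeat symmetrically for $y$.

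The class test is then immediate: if $p(x) \neq p(y)$, the two elements lie in blocks of distinct class sizes, so the answer is \emph{different}. Otherwise both sit in the block of $n_{p(x)+1}$ classes each of size $s_{p(x)+1}$, and the within-block class rank of $x$ is $\lceil (x - \sigma_{p(x)})/n_{p(x)+1}\rceil$, computable in $O(1)$ from the stored $\sigma_{p(x)}$ and $n_{p(x)+1}$; the query answer is the equality test between this rank and its counterpart for $y$.

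There is no single hard step here — the substantive work was done in the predecessor lemma and in the square-root lemma, and the rest is assembly. The only places I would be careful are the boundary cases: when $\lceil\sqrt{2x}\rceil = 1$ the candidate triple for $A$ must be truncated, and when $p(x) = k$ the expression $n_{p(x)+1}$ is undefined. Both are handled by the conventions $\sigma_0 = 0$ and an implicit sentinel $\sigma_{k+1} = n+1$, which add only $O(\lg n)$ bits and preserve all the constant-time lookups.
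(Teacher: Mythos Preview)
Your proposal is correct and follows the paper's own argument essentially step for step: the same four stored components (partial sums, multiplicities, the pointer array $A$, and the square-root tables $E,O$), the same $O(\sqrt n\,\lg n)$ space accounting via $k\le\sqrt{2n}$, and the same query procedure that invokes the square-root lemma, then the predecessor lemma to narrow $p(x)$ to three candidates, then the within-block rank test. Your explicit treatment of the boundary cases ($\sigma_0=0$, a sentinel at index $k{+}1$, truncating the candidate triple near the ends of $A$) is a nice addition that the paper leaves implicit.
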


\section{Supporting Unions}
\label{dynamic}
Finally we discuss space efficient structures that can support merging of two classes in an equivalence relation and still support equivalence queries.
The merge operation takes two classes of the equivalence relation and merges them to obtain a new class destroying both the old ones.  We show
\begin{theorem}
\label{union-find}
Given a partition of an $n$ element set into equivalence classes, it can be represented using $O(\sqrt n \lg n)$ bits such that the equivalence query and merge queries can be supported in $O(\lg n/\lg \lg n)$ worst case time. In fact, using the same space, equivalence query can be supported in $O(\alpha (n))$ amortized time and merge queries can be supported in $O(\alpha (n))$ expected amortized time, where $\alpha (n)$ is the inverse Ackermann function.
\end{theorem}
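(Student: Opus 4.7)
The plan is to maintain the implicit representation of Theorem~\ref{consttime} in a dynamic form and overlay on it a small union--find that tracks class identification across merges. The key observation making the space bound possible is that $k\leq\sqrt{2n}$ and $\sum_j\lg n_j=O(\sqrt n)$ (Section~\ref{optspace}) hold for every partition of an $n$-element set, independently of the history that produced it. So as long as we maintain the size profile $(s_i,n_i)_{i=1}^k$ and the associated partial sums $\sum_{j\leq i} s_jn_j$ under merges, the current partition continues to be encodable in $O(\sqrt n\lg n)$ bits.

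First I would replace the static array $A$ of Theorem~\ref{consttime} by a dynamic predecessor dictionary on the $O(\sqrt n)$ partial-sum values $\sum_{j=1}^{i}s_jn_j$. For the worst-case bound, use a fusion-tree-style structure supporting insert, delete, and predecessor in $O(\lg n/\lg\lg n)$ time on the word RAM; for the randomized variant, use a hashed vEB-style dictionary supporting $O(1)$ expected time. Both fit in $O(\sqrt n\lg n)$ bits because there are only $O(\sqrt n)$ entries, each of $O(\lg n)$ bits. A merge affects at most three tuples in the profile (decrement the two source counts $n_a$ and $n_b$, increment the target count $n_{a+b}$, possibly creating or destroying a tuple) and therefore costs $O(1)$ dictionary updates.

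Second I would overlay a union--find on the $O(\sqrt n)$ size groups of the profile so that the identity of a class survives its absorption into larger ones. For the worst-case bound, use a deterministic $O(\lg n/\lg\lg n)$ union--find operating on this small universe; for the amortized bound, use randomized linking by rank with path compression, which delivers $O(\alpha(n))$ expected amortized time per operation. Since the overlay holds $O(\sqrt n)$ nodes with $O(\lg n)$-bit parent and rank fields, its storage is also $O(\sqrt n\lg n)$ bits. Equivalence queries resolve each input label through the current backbone (constant-time predecessor by Theorem~\ref{consttime} when static, $O(\lg n/\lg\lg n)$ when the backbone is dynamic) and then compare the resulting class representatives through the overlay.

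The main obstacle is reconciling the caller's labels, which are handed out by the implicit scheme, with the evolving profile: rebuilding the backbone after each merge would cost $\Omega(\text{class size})$ relabelings. I would sidestep this by leaving the backbone untouched between rebuilds: each merge writes only into the overlay together with a log of $O(1)$ profile deltas, so the log has size $O(\sqrt n\lg n)$ bits. When the log reaches $\Theta(\sqrt n)$ entries we rebuild the backbone from scratch at cost $O(\sqrt n\lg n)$ word operations, amortizing to $O(\lg n)$ per update, which is absorbed by the time bounds coming from the predecessor and union--find layers. The two elements named in a merge receive fresh labels (returned to the caller) identifying the representative of the combined class; all other labels remain valid through the overlay.
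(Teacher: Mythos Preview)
Your high-level shape---keep the static backbone of Theorem~\ref{consttime}, record merges in a small overlay, and rebuild after $\Theta(\sqrt n)$ merges---is exactly what the paper does. But the overlay you describe is on the wrong objects, and a crucial bridging dictionary is missing.

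You place the union--find on the ``$O(\sqrt n)$ size groups of the profile.'' A size group is the set of all classes currently having size $s_i$; there may be $n_i$ of them, and $n_i$ can be large. When two particular classes of sizes $s_a$ and $s_b$ merge, the other $n_a-1$ classes of size $s_a$ are unaffected, so a union--find whose nodes are size groups cannot distinguish the merged class from its former siblings. The paper instead puts the union--find $F$ on the \emph{individual classes that have participated in some merge since the last rebuild}; there are at most $2c\sqrt n=O(\sqrt n)$ of those because only $c\sqrt n$ merges occur between rebuilds, and that is the reason the overlay has $O(\sqrt n)$ nodes---not the bound $k\le\sqrt{2n}$ on distinct sizes.

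Relatedly, you never say how an equivalence query gets from the backbone to the overlay. Given $x$, the backbone tells you the \emph{original} class identifier $c(x)$ (via $p(x)$ and the $\lceil\cdot/n_{p(x)+1}\rceil$ computation). You then need to test whether $c(x)$ is one of the $O(\sqrt n)$ classes that have been touched by a merge, and if so locate its node in $F$. The paper uses a separate dynamic dictionary $M$ (fusion tree for the worst-case bound, dynamic perfect hashing for the amortized bound) keyed on class identifiers and storing the parent pointer into $F$; this is where the $O(\lg n/\lg\lg n)$ and the expected-$O(1)$ terms actually come from. Your ``dynamic predecessor dictionary on the partial-sum values'' does not serve this role, and in any case your claim that a merge touches $O(1)$ partial sums is false: decrementing a single $n_a$ shifts $\sum_{j\le i}s_jn_j$ for every $i\ge a$, so that structure cannot be maintained cheaply---which is precisely why the paper, like your final paragraph, freezes the backbone and rebuilds it wholesale after $\Theta(\sqrt n)$ merges.
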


\begin{proof}
The primary structure we maintain is the one as in the proof of Theorem \ref{consttime}. To support merge operations, we maintain an auxiliary structure that captures the merges that have happened until $O(\sqrt n)$ sets have merged. During this time, the original labeling of the elements is maintained. After $O(\sqrt n)$ merges have happened, the entire data structure is reconstructed with relabeling of the elements.

In the following, we represent an equivalence class (that has been involved in a merge) by the smallest element (label) in the class.
The auxiliary structure contains
\begin{itemize}
\item
a forest $F$ of rooted trees with the nodes having the label of set that has been involved in the merges since the previous relabeling of elements. Each node has a (parent) pointer to the node containing the label of the class to which it has been merged. If it is the root node, this pointer is a NIL pointer. We also keep a counter to indicate the number of edges (i.e. the number of merges that have happened since the last relabelling) in the forest. 
\item
A succinct data structure $M$ for the labels of all the sets that have been involved in merges since the previous relabeling of the elements, where insert and membership can be supported. In this structure, we store the parent pointer along with the element (or a nil pointer if the element is a root of 
$F$).  We also store a marker bit with each element indicating whether or not its a leaf in $F$.
\end{itemize}
To support the equivalence query, we proceed as in the proof of Theorem \ref{consttime} in the primary structure. If the two elements of the query are in the same class, then we answer affirmatively. Otherwise, we check the auxiliary structure to see whether those two sets had been since merged. This is done by
\begin{enumerate}
\item
first checking whether these two set labels are present in the auxiliary structure $M$. If either one is not present in $M$, then we return that the two elements of the query are in different equivalent classes.
\item
If both lables are present in $M$, then we follow through their parent pointers in $F$, and find the root of the trees they belong to. If both the roots are the same, then we report that  both elements are in the same equivalence class and otherwise report that they are in different equivalence classes.
\end{enumerate}
The complexity of these operations is dominated by the membership query in $M$, and the find query in $F$.

To support the merge query, first let us assume that the number of merges that have happened (which can be determined by looking at the counter in $F$) since the last relabelling is at most $c \sqrt n$ for some fixed constant $c$. We update the counter in $F$ after every merge until the counter reaches $c \sqrt n$.
Let $x$ and $y$ be the labels of the sets to be merged. If either of them is not in $M$, then we insert them into $M$, and then create a new tree containing that element for each (or either) of them.
Then we simply make the root of the tree containing $x$ or $y$ the child of the the root of the tree containing the other, as dictated by the union-find algorithm, and update the parent pointer of the node whose parent changed, in $M$. We also mark or unmark the leaf marker with $x$ or $y$ in $M$.

The time is dominated by the time to do find and union in $F$, and to perform membership and insert in $M$.

If $c \sqrt n$ real merges have happened since last relabelling (as indicated by the counter in $F$), the relabelling is reconstructed with the new classes and their sizes, and the auxiliary structures is cleared. For this, first
we compute the sizes of the newly constructed sets (which are at the roots of the forest $F$) as follows: Initialize their sizes to $0$. 
We scan through elements in $M$ until we reach the first leaf (as indicated by the leaf marker). 
The size of the set of the root will be increased by the sizes of all sets along the path from the leaf to the root. Simultaneously, the `original' sizes of these sets are updated by updating the $n_i$ values and the $s_i n_i$ values.

For this, we follow the parent pointer in the path up to the root by performing the following operations for each node $x$ along the path (including the leaf). 
We will initialize the increment value to $0$. 
Using Theorem~\ref{consttime}, find the set containing $x$, and the size of the set $s_i$. Decrement $n_i$ by $1$, decrease $\gamma_i = s_i n_i$ by $s_i$.
Add $s_i$ to the increment value (to be added to the size of the set in the root). 

Once we reach the root, we increment the size of the class represented by the root node by the increment value (which is the sum of the sizes of the sets in the leaf to root path).
Now we go to the structure $M$, continue to find the next leaf and repeat. Once we are done with a leaf to root path, we will mark that leaf in $M$ as visited, so that we don't repeat that path. We are done once we have traversed all the leaves in $M$.

Thus in $O(\sqrt n)$ time, we have computed the sizes of the new sets (roots in $F$), and
updated the values of the `old' $\gamma$ values along with the new number ($n_i$s) of classes contributing to the $\gamma$ value.
Now we need to sort and merge the `old' $\gamma$ values with the new set sizes, except we need to determine whether the new set sizes already exist. For this purpose, we first organize the $\gamma $ values (including the new set sizes whose $n_i$ values are $1$ temporarily) based on the $s_i$ values. The set size for a $\gamma_i$ is obtained from its $n_i$ 
value. After we organize the $\gamma$ values based on the set sizes, all $\gamma$ values with the same set sizes (at most two of them, one from the old and one from the new) will be together. So in another scan, by appropriately updating the $\gamma$ and the $n_i$ values, we can ensure that there is only one $\gamma$ value for each set size. 
Now all we need to do is to sort the $\gamma$ values and appropriately move the $n_i$ values. Sorting based on the sizes and the $\gamma$ values can be done in $O(\sqrt n)$ time by a radix sort~\cite{esa2007} as the $s_i$ and $\gamma$ values are at most $n$.
Thus, the restructuring results in $O(\sqrt n)$ time, for an amortized cost of $O(1)$. At this point, we can clear the auxiliary structures $M$ and 
$F$. The entire step can be deamortized using strandard tricks~\cite{overmars,raman} by starting the reconstruction $\sqrt n$ steps before, while maintaining the old and new structures during this process.

By using a fusion tree~\cite{FW,fusion1} for the insert and membership structure $M$ and the union-find data structure of~\cite{ABAR99,Blu86,Smi90} for $F$, 
the worst case bounds of the theorem follow.  
By using a dynamic perfect hashing scheme~\cite{DPH} for $M$ instead of the fusion tree, the amortized bounds of the theorem follow.
\qed
\end{proof}
\section{Conclusions}
\label{concl}
We have discussed time-space tradeoffs for the fundamential problem of supporting equivalence queries.
Our first result is an establishment of a tight bound for the label space required for the elements to answer equivalence query by just looking at the labels.

Then we showed that one can represent an equivalence relation on $n$ elements using $O(\sqrt n)$ bits of space, which is a constant factor of the information theoretically optimum number of bits required. 
Our scheme allows an implicit labeling of elements and supports
equivalence queries in $O(\lg n)$ time. Improving this to constant time is an interesting open problem, though we could achieve constant time using $O(\sqrt n \log n)$ bits.

We also developed a dynamic structure where the merge operation can also be supported as fast as the standard union-find structures using $O(\sqrt n \lg n)$ bits. Our main contribution is on the time-space tradeoffs for representing equivalence queries using clever use of several known structures. 

Not withstanding our claim in Theorem~\ref{union-find} that we can support unions and finds on an $n$ element set using $O(\sqrt n \lg n)$ bits and at the same asymptotic time as the best known (not so space efficient) structures, we don't know of a direct way to apply them for supporting static or (incremental) dynamic connectivity queries on graphs. This is because the original labels of the elements are modified to obtain our space efficient structure. Hence to support the user queries, we need to store the permutation that maps the user labels to our labels or update the user of the labelling (in the latter case, the user herself can answer the connectivity query). This is particularly important in our dynamic structure that supports union, as every so often, the labels are recomputed.

An example setting where our structures can be applicable is as follows. Consider a distributed environment where multiple processors are performing some intensive computation. These processors are space constrained, and each processor receives a request with a label from two different processors to perform some computation. Assume that the application requires the processor receiving this request to first determine whether the two labels are in the same equivalence class to perform the computation. So it does that using our small space union-find structure. 
In the case of dynamic (merge) queries, all processors must be aware of all the merges happening at any of the processors to perform their own local computation in case the labels get changed. Alternatively we can assume synchrony and communicate the relabelling after every change. 

One useful query in this scenario that can be supported easily by our structure is the following. Each of the processors may have a small subset of labels about which they are particularly interested in.
When a request for an equivalence query comes in, the processor may also want to know whether there is any element in its interest set that is in the same equivalence class as the query element.
This can be supported in constant time by maintaining a succinct 
membership structure (as in~\cite{Brodnik-Munro}) for the classes in which the elements in its interest set belongs to, in addition to our succinct union-find structure.

Finally, given that union-find is a fundamental structure for representing equivalence classes, we feel that our structure and approach will find applications in other scenarios we haven't imagined. \\

\noindent
{\bf Acknowledgement} The second author gratefully acknowledges the discussions he had with Tetsuo Asano which initiated work on the problem.

\end{document}